\newtheorem{theorem}{Theorem}
\newtheorem{definition}{Definition}
\newtheorem{example}{Example}
\def\BibTeX{{\rm B\kern-.05em{\sc i\kern-.025em b}\kern-.08em
    T\kern-.1667em\lower.7ex\hbox{E}\kern-.125emX}}
\begin{document}

\title{\LARGE \bf
Graph-Theoretic Analyses and Model Reduction for an Open Jackson Queueing Network}
\author{Chenyan Zhu and Sandip Roy}

\maketitle
\thispagestyle{empty}
\pagestyle{empty}

\begin{abstract}
A graph-theoretic analysis of the steady-state behavior of an open Jackson queueing network is developed.  In particular, a number of queueing-network performance metrics are shown to exhibit a spatial dependence on local drivers (e.g. increments to local exogenous arrival rates), wherein the impacts fall off across graph cutsets away from a target queue.  This graph-theoretic analysis is also used to motivate a structure-preserving model reduction algorithm, and an algorithm that exactly matches performance statistics of the original model is proposed.  The graph-theoretic results and model-reduction method are evaluated via simulations of an example queueing-network model.
\end{abstract}

\section{Introduction}
Queueing network models are used for communication systems analysis, emergency response planning, transportation systems design, and factory optimization among many other applications.  Motivated by these applications, a wide family of queueing-network models have been developed, which admit different types of formal analyses and are also amenable to simulation \cite{chen2001fundamentals,kelly,kleinrock}.  Among these various models, queueing networks with Poisson arrival processes and memoryless servers -- which are often referred to as Jackson networks -- are of special interest, both because they are representative of many typical queueing processes and because of their tractability \cite{chen2001fundamentals,kelly,jackson}.  Of particular note, the asymptotic joint queue-length distribution of these models have an explicitly-computable product form \cite{jackson}. A number of other statistics of the models, including queueing delays and network stay times of jobs, are also eminently tractable.  

A number of applications are requiring queueing-network models of increasingly large dimension and sophistication.  For example, the product distribution networks of online shipping merchants, which can be represented using queueing networks, are of enormous dimension \cite{retail}.  Similarly, voice and data communications backbones have grown extremely rapidly in both size and complexity \cite{kushner}.  Meanwhile, queueing models for service processes such as emergency response are sometimes becoming spatially intertwined, because of new autonomy and Internet-based coordination \cite{emergency1,emergency2}.  Because modern applications require queueing models of very high dimension, formal analyses of these models are often computationally unappealing even when the models are tractable, e.g. if they are of the Jackson network form.  At the same time, simulation-based analyses and even some formal results do not give clear insights into the underlying processes, because of their scale.  These challenges motivate new perspectives and approaches for queueing-network analysis. 

In this article, we explore two new approaches for the 
analysis of queueing networks, which are meant to support evaluation of open Jackson queueing network models of large scale:  
\begin{enumerate}
\item We build on the classical steady-state analysis of the Jackson network model, to tie queueing performance metrics (e.g., delays, network stay times) with the routing graph of the queueing network.  This graph-theoretic analysis parallels spatial analyses of linear dynamical network models (e.g. models for infection spread or synchronization), which have been recently developed within the controls community \cite{studli,kooreh,vosughi2019target}.  In the queueing-network context, the outcome of this analysis is a spatial majorization of certain performance metrics with respect to cutsets in the network's graph, which indicates that traffic interdependencies are localized.
\item  The graph-theoretic analysis of the Jackson network model provides a basis for a structure-preserving model reduction algorithm, which generates a lower-dimensional approximation to a queueing network which preserves a portion of the network's structure while abstracting away other queues.  The reduction is shown to exactly replicate some performance metrics of the original model.  The idea of structure-preserving model reduction aligns with model reduction techniques for linear differential-equation models (e.g. coherency-based reduction for the power-system models) (e.g. \cite{power1,reduction1}). However, to the best of our knowledge, this is the first effort to develop systematic model reduction techniques for queueing networks.
\end{enumerate}

The remainder of the article is organized as follows.  Section II reviews the Jackson queueing-network model and articulates the graph-theoretic analysis and model reduction goals of our effort.  Section III presents the basic graph-theoretic or spatial analysis of the model.  In Section IV, this analysis is used to characterize a number of queueing-network performance metrics.  The model reduction algorithm is developed in Section V.  Finally, 
the formal results are illustrated using an example in Section VI.

\section{Modeling and Aims}
An open Jackson network with $M$ memoryless (exponential) single-server queues, labeled $1,\hdots, M$, is considered.  Jobs are modeled as arriving at each queue $i$ from outside the network according to a Poisson process with rate $\rho_{0i}$. Jobs arriving at each queue $i$ (from outside the network or from other queues) are served according to a first-come-first-served (FCFS) discipline, at constant rate $\mu_i$; the queues are assumed to have infinite buffers.  Once a job at queue $i$ completes service, it is either routed to queue $j$ ($j=1,\hdots, m$) with probability $r_{ij}$, or departs the system with probability $r_{i0}$, where $\sum_{j=0}^m r_{ij}=1$.
A weighted directed graph (digraph) $\Gamma$ with $M$ vertices (labeled $i=1,\hdots, M$) is used to represent the routing protocol of the queueing network.  Specifically, each queue $i$ in the network is represented by the corresponding vertex $i$ in the digraph. An edge is drawn from vertex $i$ to vertex $j$ if $r_{ij}>0$, and the edge is assigned a weight of $r_{ij}$. The digraph associated with a queueing network is illustrated in Figure \ref{fig1}.  We also define the $M \times M$ routing matrix $R$ as $R=[r_{ij}]$.

\begin{figure}[thpb]
\centerline{\includegraphics[scale=0.5]{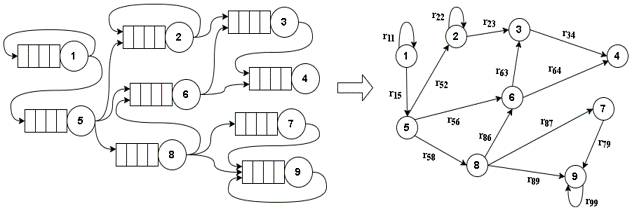}}
\caption{An illustration of a Jackson queueing system with 9 queues modeling by a graph $\Gamma$.}
\label{fig1}
\end{figure}

Analyses of queueing networks are often primarily focused on: 1) queue-length statistics and distributions; 2) delay- and sojourn- times of jobs at individual queues as well as overall stay times in the network; and 3) utilization fractions of individual queues. For the open Jackson network model, mean values of these performance metrics can be found by first finding the equilibrium total arrival rates into each queue $i$, which we denote as $\lambda_i$ \cite{chen2001fundamentals}.  Specifically, the arrival rates can be found by solving
the following linear system of equations: $\lambda_i=\sum_{j=1}^{j=M} \lambda_jr_{ji}+
    \rho_{0i}$, $i=1,\hdots, M $, provided that the
network reaches a statistical equilibrium which happens if $\lambda_i < \mu_i$ for all $i$.  We assume from here on that this condition holds. These equations, which are known as the traffic equations, can be rewritten in vector form as:$ \bm{\lambda}={{R}'\bm{\lambda}}+\bm{\rho}$, where $\bm{\lambda}=\begin{bmatrix} \lambda_1 \\ \vdots \\ \lambda_M \end{bmatrix}$, $\bm{\rho}=\begin{bmatrix} \rho_{01} \\ \vdots \\ \rho_{0M} \end{bmatrix}$, and $R'$ is the transpose of the matrix $R$.  
The solution to the traffic equation is \cite{chen2001fundamentals} 
\begin{equation}
    \bm{\lambda}=(I-R')^{-1}\bm{\rho} \label{1} .
\end{equation}

Once the equilibrium arrival rates have been determined, mean values of other performance metrics can readily be computed.  
Specifically, the mean queue length (not including the job that is being served) is given by \cite{agrawal2015introduction} 
\begin{equation}
    L_i=E[q_i(t)-1]=\lambda_i^2/(\mu_i^2-\mu_i\lambda_i) .   \label{2}
\end{equation}
The mean sojourn time for a job in queue $i$ (the mean time spent by a job in the queue) is given by \cite{queuedelay2005}:
\begin{equation}
    W_i=1/(\mu_i-\lambda_i)  . \label{3} \\
\end{equation}
The mean delay for a job in queue $i$ (the mean time spent by the job while waiting for service) is therefore given
by \cite{queuedelay2005}:
\begin{equation}
    D_i=1/(\mu_i-\lambda_i)-\frac{1}{\mu_i}  . \label{4} \\
\end{equation}
Additionally, the mean overall time in the network of a job originally entering the network at queue $i$ (i.e. the stay time of the job) can be found as \cite{queuedelay2005}:
\begin{equation}
    {T_{i,d}}=[(I-R)^{-1}\bm{W}]_i,  \label{5}
\end{equation}
where $\bm{W}=\begin{bmatrix} W_1 \\ \vdots \\ W_M \end{bmatrix}$.

Relevant to our development, the solution to the traffic equation given in $\eqref{1}$ allows a concise interpretation for the contribution of exogenous arrivals to the queueing process at an individual queue.  Specifically, let us consider a row $i$ of the matrix $(I-R')^{-1}$.  
The entries in the row define how exogenous arrivals at each queue eventually contribute to the total arrival rate (internal or external) at queue $i$.  Thus, they decide how exogenous traffic increments contribute to the total arrival demand at the queue, i.e. they show the relative influences of different exogenous traffic streams on traffic demand (and hence queue length and delay) at a particular queue.
Based on this interpretation, we label the matrix $(I-R')^{-1}$ as the contribution matrix, and denote it as ${\alpha}$. Specifically, each entries of the contribution matrix denoted as $\alpha_{ij}$ is interpreted as a scaling factor of exogenous arrivals at queue $j$ contributing to demand at queue $i$. 

The research described here pursues two main aims with regard to the open Jackson network model.  The first aim is to understand whether and how arrival demands, and hence queueing performance metrics, depend on the graph topology of the Jackson network model.   Conceptually, it might be expected that total arrival demand at a particular queue is most strongly influenced by exogenous arrivals at nearby queues, i.e. nearby exogenous arrivals have larger contributions to the total arrival rate while far-away exogenous inputs necessarily have limited influence.  Here, we formally verify this spatial dependence of queue arrival demands on exogenous arrivals, and study implications on performance metrics (including queue length, delay, and overall stay time).  The results are important because they give an understanding of traffic propagation and disturbance impacts in queueing network models, and in particular show that queueing network dynamics exhibit a spatial localization which is relevant for analysis and design of large networks.

The second aim of our study is to develop and characterize a structure-preserving model reduction technique for the open Jackson network model, so as to simplify statistical analysis of the network model.  Our model reduction approach is based on the recognition that traffic propagation in queueing networks exhibits a localization.  Given this localization, it is reasonable to develop a reduction which entirely preserves a study area of interest (e.g. an area where exogenous traffic inputs may change, congestion/delays are of concern, or utilization is of interest), and condenses away the rest of the network.  This sort of structure-preserving reduction is appealing as it allows analysis of performance statistics and queue dynamics for the area of interest, with much reduced computation.  In this study, we aim to develop a model reduction strategy of this sort, and to understand how well the reduced model replicates the behavior of the original.

\section{Basic Graph Theoretic Analysis}

The focus of this section is to characterize the contribution weights in the matrix $\alpha$ in terms of the network's graph topology. The analysis requires several definitions related to the graph $\Gamma$ of the queueing network model, presented next:

1) One vertex in $\Gamma$ (respectively queue in the Jackson network) is labeled as the target vertex (target queue).  The analysis is focused on the contribution weights and later performance statistics of the target queue.  Another vertex (queue) is labeled as the receiver vertex (queue); the terminology `receiver' is used to indicate that we are interested in the propagation/impact of exogenous arrivals received at the queue.

(2) The vertices in the graph (queues in the network) are partitioned into three sets,  a head ($H$), a cutset ($C$), and tail ($T$).  The cutset is a vertex-cutset which separates the head and tail, i.e. every path in $\Gamma$ from a vertex in the head to one in the tail passes through the cutset.  The head is always defined to contain the target queue.  Receiver queues in the different sets are considered in the development. For ease of presentation, the vertices (queues) are indexed starting with those in the Head, followed by those in the cutset, and finally those in the tail. Figure \ref{fig2} illustrates the graph concepts.

\begin{figure}[thpb]
\centerline{\includegraphics[scale=0.5]{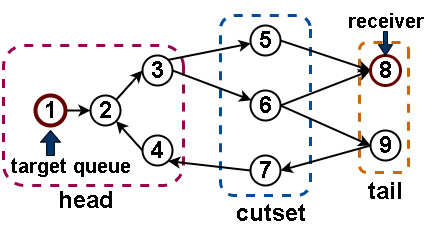}}
\caption{Network graph $\Gamma$  with vertex (queue) 1 as target and vertex (queue) 10 as receiver, where $H=\{1,2,3,4\}$, $C=\{5,6,7\}$ and $T=\{8,9\}$.}
\label{fig2}
\end{figure}
A routing matrix with a block-partition form following the above graph topology is considered:
\begin{equation}
    R=
    \begin{bmatrix}
	R_{HH} & R_{HC} & \bm{0} \\
	R_{CH} & R_{CC} & R_{CT} \\
	\bm{0} & R_{TC} & R_{TT} \\
    \end{bmatrix}\label{6}
\end{equation}
where each block contains routing probabilities (edge weights) from vertices in the first subscripted subset to vertices in the second subscripted subset.   The block-partition form is used as a default for all routing matrix analyses below.

The main result presented in this section is a  comparison of contribution weights for different receiver locations. Specifically, entries in a row of a contribution matrix, which identify the contribution weights for different receivers for a specific target, are characterized in the following theorem:

\begin{theorem}
Consider an open Jackson network with routing matrix $R$ and corresponding network graph $\Gamma(R)$. Also, consider a target $q$, a cutset $C$,  and a receiver which is located in the tail $T$ or cutset $C$.  Then the contribution weights satisfy: 
$\max_{c \in C} \alpha_{qc} \ge \alpha_{qt}$ for all $t \in T$, i.e.
the contribution for at least one receiver on the cutset majorizes the contribution for any receiver in the tail.  
\end{theorem}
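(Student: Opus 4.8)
\ The plan is to work from the Neumann-series expansion $\alpha=(I-R')^{-1}=\sum_{k\ge 0}(R')^{k}$, which is legitimate because $R$ is sub-stochastic with spectral radius below one (equivalently, $(I-R')^{-1}$ exists, as is already used in \eqref{1}). Reading off entries, $\alpha_{qj}=\sum_{k\ge 0}[R^{k}]_{jq}$, so $\alpha_{qj}$ is the total weight of all directed walks from $j$ to $q$ in $\Gamma$, where a walk is weighted by the product of the routing probabilities along its edges; in particular every $\alpha_{qj}\ge 0$. (Equivalently $\alpha_{qj}$ is the expected number of visits to $q$ made by a job that originates at $j$.)

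Next I would split each walk from a tail receiver to the target at its first visit to the cutset. Fix $t\in T$ and recall that the target $q\in H$. By the block form \eqref{6}, the $T$-rows of $R$ have no entries in the $H$-columns, so a walk started in $T$ stays in $T$ until it first lands in $C$, and any walk from $t$ to $q\in H$ must land in $C$ at least once. Hence every walk from $t$ to $q$ factors uniquely into a prefix from $t$ to some $c\in C$ whose intermediate vertices all lie in $T$, followed by an arbitrary walk from $c$ to $q$; conversely any such concatenation is a walk from $t$ to $q$ whose weight is the product of the two pieces' weights. Since all weights are nonnegative and $\alpha_{qt}$ is finite, the walk-sum may be regrouped by the first cutset vertex $c$, giving
\[
\alpha_{qt}=\sum_{c\in C}\beta_{tc}\,\alpha_{qc},\qquad \beta_{tc}:=\big[(I-R_{TT})^{-1}R_{TC}\big]_{tc}\ge 0,
\]
where $\beta_{tc}$ collects the weights of the $T$-interior prefixes from $t$ to $c$ (again a convergent nonnegative Neumann series, since $R_{TT}$ is a principal submatrix of $R$ and so has spectral radius below one).

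The remaining ingredient is the normalization $\sum_{c\in C}\beta_{tc}\le 1$ for every $t\in T$. Since $R$ is sub-stochastic, $\sum_{j=1}^{M}r_{tj}=1-r_{t0}\le 1$; as row $t$ has no entries in the $H$-columns, this reads $R_{TT}\bm{1}+R_{TC}\bm{1}\le\bm{1}$, i.e.\ $R_{TC}\bm{1}\le (I-R_{TT})\bm{1}$ with $\bm{1}$ the all-ones vector. Multiplying on the left by the nonnegative matrix $(I-R_{TT})^{-1}$ gives $(I-R_{TT})^{-1}R_{TC}\bm{1}\le\bm{1}$, i.e.\ exactly $\sum_{c\in C}\beta_{tc}\le 1$ (probabilistically, this sum is the chance that a job starting at $t$ ever leaves $T$ through $C$). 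Combining the displayed identity with $\alpha_{qc}\ge 0$ gives, for every $t\in T$,
\[
\alpha_{qt}=\sum_{c\in C}\beta_{tc}\,\alpha_{qc}\le\Big(\sum_{c\in C}\beta_{tc}\Big)\max_{c\in C}\alpha_{qc}\le\max_{c\in C}\alpha_{qc},
\]
which is the claimed majorization.

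I expect the one genuinely delicate point to be the first-passage factorization of walks: one must verify that sending a walk from $t$ to $q$ to the triple consisting of its first cutset vertex $c$, the uniquely determined $T$-interior prefix from $t$ to $c$, and the suffix from $c$ to $q$ is a weight-preserving bijection. This is exactly where the block structure \eqref{6} --- the absence of direct $T$-to-$H$ edges --- is used, and it must be paired with the routine justification that the resulting regrouping of the nonnegative, absolutely convergent series for $\alpha_{qt}$ is legitimate. Everything else is bookkeeping with nonnegative Neumann series, with sub-stochasticity of $R$ supplying the essential bound $\sum_{c\in C}\beta_{tc}\le 1$; a probabilistic proof via the strong Markov property of the defective random walk at its first hitting time of $C$ is an equivalent and perhaps shorter alternative.
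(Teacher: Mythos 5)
Your proof is correct and follows essentially the same route as the paper: both hinge on the identity $\alpha_{qt}=\sum_{c\in C}[(I-R_{TT})^{-1}R_{TC}]_{tc}\,\alpha_{qc}$ together with the substochasticity bound that the row sums of $(I-R_{TT})^{-1}R_{TC}$ are at most one. The only difference is presentational: the paper obtains the key identity in one line from the bottom block of $(I-R)\alpha_1=\bm{e}_1$ rather than via your first-passage walk decomposition (which is the step you rightly flag as delicate, and which the algebraic derivation sidesteps), and it verifies the row-sum bound by inducting on powers of an augmented matrix rather than by your cleaner direct inequality $R_{TC}\bm{1}\le(I-R_{TT})\bm{1}$.
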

\begin{proof}
 Without loss of generality, consider queue 1 as the target queue. Also assume $C$ is a $\hat{c}$ queues cutset and a tail $T$ with $\hat{t}$ queues. 
 Then the contribution weights of interest are on the first row of the contribution matrix. It is convenient to partition this vector
    as $\alpha_1^T=[\bm{x}_H^T|\bm{x}_C^T|\bm{x}_T^T]$, where the partitions correspond to the head (which includes the target vertex), the cutset, and the tail (which contains the receiver).
    From the fact that the (column) vector 
    $\alpha_1=\begin{bmatrix} \bm{x}_H \\ \bm{x}_C \\
    \bm{x}_T \end{bmatrix}$
    is the first column of $(I-R)^{-1}$, it follows
    that $(I-R) \bm{x}=\bm{e}_1$, where $\bm{e}_1$
    is an indicator vector with $1$th entry equal to 
    $1$.  
    
    Substituting the block forms, we get
    \begin{equation}
        \begin{bmatrix}
	I-R_{HH} & -R_{HC} & \bm{0} \\
	-R_{CH} & I-R_{CC} & -R_{CT} \\
	\bm{0} & -R_{TC} & I-R_{TT} \\
    \end{bmatrix} 
    \begin{bmatrix} \bm{x}_H \\ \bm{x}_C \\
    \bm{x}_T \end{bmatrix}=\bm{e}_1  . \label{7}
    \end{equation}
    From the lowest block in the equation, it immediately follows that:
    \begin{equation}
     (I-R_{TT})\bm{x}_T=R_{TC}\bm{x}_C, \label{8}
     \end{equation}
     The solution to the system of equations is 
     \begin{equation}
     \bm{x}_T=\sum_{n=0}^{\infty}R_{TT}^{n}R_{TC}\bm{x}_C, \label{9}
     \end{equation}
     provided that the summation converges.
     However, since $R_{TT}$ is the principal submatrix of a substochastic matrix, it follows that its eigenvalues are strictly within the unit circle, and the sum converges. From here, it follows that the $i$th entry of
 $\bm{x}_T$ is given by
     \begin{equation}
        [\bm{x}_T]_i=\sum_{n=0}^{\infty}[R_{TT}^{n}R_{TC}]_i \bm{x}_C . \label{10}
    \end{equation}
  Let us denote the largest entry of the vector $\bm{x}_C$ as $[x_C]_{max}$. Noting 
  that $[R_{TT}^{n}R_{TC}]_i$ is nonnegative, the following upper bound is obtained:
      \begin{equation}
        [\bm{x}_T]_i\leq\sum_{n=0}^{\infty}[R_{TT}^{n}R_{TC}]_i \begin{bmatrix} [x_C]_{max} \\ \vdots \\  [x_C]_{max}\end{bmatrix}. \label{11} 
    \end{equation}
 This can be further simplified to:
     \begin{equation}
        [\bm{x}_T]_i\leq[x_C]_{max}\sum_{j=1}^{j=\hat{c}}\sum_{n=0}^{n=\infty}  [R_{TT}^{n}R_{TC}]_{ij}. \label{12}
    \end{equation}
 Next, we recognize that $\sum_{j=1}^{j=\hat{c}}\sum_{n=0}^{n=\infty}  [R_{TT}^{n}R_{TC}]_{ij}$ represents the sum of the $i$th row of $\sum_{n=0}^{n=\infty} R_{TT}^{n}R_{TC}$. However, it can be shown that the sum of each row of $\sum_{n=0}^{n=\infty} R_{TT}^{n}R_{TC}$ is less than 1. To show this, consider the matrix $M_R=\begin{bmatrix} R_{TT} & R_{TC} \\ \bm{0} & I_{\hat{c}} \\ \end{bmatrix}$, the sum of each of the first $\hat{t}$ rows of $M_R$ is less than or equal 1 since $R_{TC}$ and $R_{TT}$ are submatrices of $R$. It follows that the each of the first $\hat{t}$ rows of $M_R^n=\begin{bmatrix} R_{TT}^n & \sum_{k=0}^{n-1}R_{TT}^n R_{TC} \\ \bm{0} & I_{\hat{c}} \\ \end{bmatrix}$ has sum less than or equal to 1, for any $n=0,1,\hdots$. The result therefore holds in the limit, i.e. each row of $\sum_{n=0}^{n=\infty} R_{TT}^{n}R_{TC}$ is less than 1.  Therefore, $\sum_{j=1}^{j=\hat{c}}\sum_{n=0}^{n=\infty}  [R_{TT}^{n}R_{TC}]_{ij}\leq 1$. Hence, the upper-bound given by \eqref{12} can be expressed to:
      \begin{equation}
        [\bm{x}_T]_i\leq[x_C]_{max}. \label{13}
    \end{equation}
 Recall that $[x_C]_{max}$ is the largest-magnitude contribution weight for a cutset queue, and $[\bm{x}_T]_i$ represents the contribution weight of any queue in the tail (including the selected receiver). Therefore, the theorem follows.  
\end{proof}

Theorem 1 shows that contribution weights for a selected target queue show a spatial dependence. Specifically, a particular receiver queue has a smaller contribution as compared to at least on receiver queue on a separating cutset. An immediate consequence is that the maximum contribution weight among receivers at a certain distance from the target in the network graph is a non-increasing function of the distance.

In general, although maximum contribution weights on network cutsets show a spatial dependence, the exact pattern of contribution weights is complicated.  However, for some special cases, exact analyses of the contribution weights are possible.  In the following theorem, we exactly compute
the contribution weights in the case that the network graph is a (bidirectional) line graph, i.e. the routing matrix is Toeplitz \cite{anguluri2022network} (see Figure \ref{fig3}).

\begin{figure}[thpb]
\centerline{\includegraphics[scale=0.45]{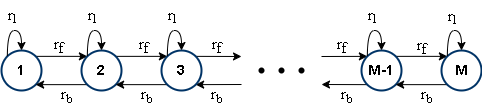}}
\caption{Line queueing network with routing probabilities $r_f$, $r_b$ and $r_l$}
\label{fig3}
\end{figure}

\begin{theorem}
Consider an open Jackson queueing network with $M$ queues, which has the linear network graph topology shown in Figure \ref{fig3}. Queue $1$ is selected as a target queue. The contribution weight when each queue $i$ is selected as the receiver is given by:
\begin{equation}
    	\alpha_{1i}=\dfrac{r_b^{i-1}(\gamma_2^{M-i+1}-\gamma_1^{M-i+1})}{r_f^i(\gamma_2^{M+1}-\gamma_1^{M+1})}, \label{14}
\end{equation}
where 
\begin{gather}
    \gamma_1= \dfrac{(1-r_l)+\sqrt{(1-r_l)^2-4r_br_f}}{2r_f}\label{15}, \\
    \gamma_2= \dfrac{(1-r_l)-\sqrt{(1-r_l)^2-4r_br_f}}{2r_f}. \label{16}
\end{gather}
\end{theorem}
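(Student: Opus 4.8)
The plan is to exploit the Toeplitz structure of the line graph to turn the linear system defining the contribution weights into a constant-coefficient second-order recurrence whose characteristic roots are exactly $\gamma_1$ and $\gamma_2$, and then to fix the two free constants from the two endpoint (boundary) equations.

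First I would recall, exactly as in the proof of Theorem 1, that $\alpha_{1i}$ is the $i$th entry of the first column $\bm{x}$ of $(I-R)^{-1}$, i.e. $(I-R)\bm{x}=\bm{e}_1$. I would fix the endpoint conventions explicitly: $r_{i,i+1}=r_f$, $r_{i,i-1}=r_b$, $r_{ii}=r_l$, where at queue $1$ there is no backward neighbour (its $r_b$-mass departs) and at queue $M$ no forward neighbour (its $r_f$-mass departs). Writing the system row by row, the equations for $i=2,\dots,M-1$ collapse to the homogeneous recurrence $r_f x_{i+1}-(1-r_l)x_i+r_b x_{i-1}=0$. The two endpoint equations can be folded into this same recurrence via ghost values: the $i=M$ equation is precisely the recurrence at $i=M$ with the convention $x_{M+1}=0$, while the $i=1$ equation, $(1-r_l)x_1-r_f x_2=1$, is precisely the recurrence at $i=1$ with the convention $x_0=1/r_b$ (substitute the actual $i=1$ equation into $r_f x_2-(1-r_l)x_1+r_b x_0=0$).

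Next I would solve the recurrence. Its characteristic equation is $r_f\gamma^2-(1-r_l)\gamma+r_b=0$, with roots $\gamma_1,\gamma_2$ as in \eqref{15}--\eqref{16}; assuming $(1-r_l)^2-4r_br_f\neq 0$ so the roots are distinct (the borderline case follows by a limiting argument, since \eqref{14} has a removable singularity there), the general solution is $x_i=A\gamma_1^i+B\gamma_2^i$, valid for $i=0,1,\dots,M+1$. Imposing $x_{M+1}=0$ gives $A\gamma_1^{M+1}+B\gamma_2^{M+1}=0$ and imposing $x_0=1/r_b$ gives $A+B=1/r_b$; solving this $2\times 2$ system yields $A=\gamma_2^{M+1}/[r_b(\gamma_2^{M+1}-\gamma_1^{M+1})]$ and $B=-\gamma_1^{M+1}/[r_b(\gamma_2^{M+1}-\gamma_1^{M+1})]$. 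Substituting back gives $x_i=(\gamma_1^i\gamma_2^{M+1}-\gamma_1^{M+1}\gamma_2^i)/[r_b(\gamma_2^{M+1}-\gamma_1^{M+1})]$; factoring $(\gamma_1\gamma_2)^i$ out of the numerator and using the Vieta identity $\gamma_1\gamma_2=r_b/r_f$ turns $(\gamma_1\gamma_2)^i/r_b$ into $r_b^{i-1}/r_f^{i}$ and leaves the factor $\gamma_2^{M-i+1}-\gamma_1^{M-i+1}$, which is exactly \eqref{14}.

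The only genuinely delicate point is the endpoint modelling: the coefficient $(1-r_l)$ in the recurrence — and hence the precise form of $\gamma_1,\gamma_2$ — relies on $r_{ii}=r_l$ being a self-loop, and the forcing term landing cleanly as $x_0=1/r_b$ relies on queue $1$ having no backward neighbour (and symmetrically at queue $M$). Once those conventions are stated, the remainder is a short computation; the main risk is sign and index slips when factoring the numerator, which the Vieta relations $\gamma_1+\gamma_2=(1-r_l)/r_f$ and $\gamma_1\gamma_2=r_b/r_f$ keep under control. (Sanity checks at $M=1$ and $M=2$, where \eqref{14} reduces to $1/(1-r_l)$ and to the entries of the explicit $2\times 2$ inverse, confirm the bookkeeping.)
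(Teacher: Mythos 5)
Your proposal is correct and follows the same route as the paper: setting up the tridiagonal system $(I-R)\bm{x}=\bm{e}_1$, recognizing it as a constant-coefficient second-order recurrence with boundary constraints, and solving via the characteristic roots $\gamma_1,\gamma_2$. The paper omits the computation as "standard," whereas you carry it out explicitly (ghost values $x_0=1/r_b$, $x_{M+1}=0$, and the Vieta simplification), and your details check out, including the $M=1$ sanity check.
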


\begin{proof}
    Each entry $\alpha_{1i}$ in the first row of the contribution matrix is calculated by iteration. To do so, we recall that the transpose of the first row can be found by solving $(I-R)(\alpha_1)'=\bm{e}_1$. The equations can be written out as follows:
    \begin{equation}
        (I-R)\alpha_1=\begin{bmatrix}
        (1-r_l)\alpha_{11}-r_f\alpha_{12} \\
        -r_b\alpha_{11}+(1-r_l)\alpha_{12}-r_f\alpha_{13}\\
        \vdots \\
        -r_b\alpha_{1,M-1}+(1-r_l)\alpha_{1M}
        \end{bmatrix}=\bm{e}_1. \label{17}
    \end{equation}
    
    From the expanded form of the equation, one sees that
    each $\alpha_{1j}$ is a weighted linear combination of 
    $\alpha_{i,j-1}$ and $\alpha_{i,j-2}$, with fixed weights.  Thus, it is seen that the $\alpha_{1j}$ can
    be found by solving a second-order linear difference equation, where there is an additional constraint on the
    first and last terms $\alpha_{1M}$ and $\alpha_{11}$ 
    arising from the first and last equations.  The solution
    to the equation is standard and is commonly found in  solving equations with tri-diagonal Toeplitz matrices, hence we omit the details.  The result is as given in the theorem statement.
    
\end{proof}

From Theorem 2, it is easy to verify that the contribution weights fall off roughly exponentially with the distance of the receiver from the target in the line-network case.

\section{Analyses of Performance Metrics}

The spatial analysis of contribution weights developed in Section III directly allows for graph-theoretic characterization of a range of performance metrics of interest for the Jackson network model.  

An immediate consequence of the contribution-weights analysis is that 
the impacts of exogenous arrival rates on total arrival rates to a queue also exhibit a spatial pattern.  Specifically, let us consider the differential change in total arrival rate at a target queue due to a change in the exogenous input rate
at a receiver.  This impact can be seen to be smaller than the maximum possible impact if the exogenous input rate at a queue on a separating cutset was instead modified; in other words, exogenous arrivals to nearby queues have a greater impact on the traffic density at a target queue, as compared to those that are further away.  This notion is formalized in the following theorem.

\begin{theorem}
Consider an open Jackson queueing network with a routing matrix $R$ and corresponding graph $\Gamma(R)$. Consider a particular target queue $q$, a cutset $C$, a tail $T$, and receiver locations $r$ in the cutset or the tail. Assume that the exogenous arrival rate at the receiver is increased by an increment $\delta>0$, and in consequence the arrival rate at the target queue is changed to $\lambda_q(\delta_r)$.  Then the arrival rate satisfies: $max_{r \in C} \lambda_q(\delta_r)\geq \lambda_q(\delta_r)$ for any $r \in T$, i.e. the maximum arrival rate change due to an exogenous input increment at a cutset receiver queue majorizes the change when the increment is at a tail receiver queue.
\end{theorem}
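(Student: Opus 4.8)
The plan is to obtain Theorem~3 as an essentially immediate corollary of Theorem~1, by exploiting the linearity of the traffic equations in the exogenous arrival vector. First I would recall from \eqref{1} that $\bm{\lambda} = (I-R')^{-1}\bm{\rho} = \alpha\bm{\rho}$, so that the equilibrium arrival rate at the target is the linear functional $\lambda_q = \sum_{j=1}^{M}\alpha_{qj}\,\rho_{0j}$. Increasing the exogenous rate at a receiver queue $r$ by $\delta$ replaces $\bm{\rho}$ by $\bm{\rho}+\delta\bm{e}_r$ while leaving the routing matrix $R$, and hence the contribution matrix $\alpha$, unchanged. By superposition this gives the exact identity $\lambda_q(\delta_r) = \lambda_q + \delta\,\alpha_{qr}$; that is, the entry $\alpha_{qr}$ of the contribution matrix is precisely the sensitivity of the target arrival rate to the exogenous input at $r$.

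Next I would maximize over receiver locations on the cutset. Since $\delta>0$, we have $\max_{r\in C}\lambda_q(\delta_r) = \lambda_q + \delta\,\max_{r\in C}\alpha_{qr}$. Theorem~1, invoked with the same target $q$, cutset $C$, and any tail queue $t\in T$, asserts $\max_{c\in C}\alpha_{qc}\ge \alpha_{qt}$. Multiplying this inequality through by $\delta>0$ and adding $\lambda_q$ to both sides preserves it, yielding $\max_{r\in C}\lambda_q(\delta_r)\ge \lambda_q+\delta\,\alpha_{qt}=\lambda_q(\delta_t)$ for every $t\in T$, which is exactly the claimed majorization.

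The one point requiring a remark is well-posedness: for $\lambda_q(\delta_r)$ to be meaningful, the perturbed network must still reach statistical equilibrium, i.e.\ $\lambda_i<\mu_i$ must persist after the increment; this is already built into the hypothesis of the theorem, which presupposes a well-defined $\lambda_q(\delta_r)$, so no additional argument is needed. Consequently I do not anticipate any genuine obstacle here — the entire content of the proof is the recognition that the contribution matrix $\alpha$ is the arrival-rate sensitivity matrix, so that the spatial majorization of Theorem~1 transfers verbatim to arrival-rate increments.
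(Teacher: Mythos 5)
Your proposal is correct and follows essentially the same route as the paper's own proof: both write $\lambda_q(\delta_r)=\lambda_q+\delta\,\alpha_{qr}$ by linearity of the traffic equations and then transfer the cutset majorization of the contribution weights from Theorem~1 directly to the arrival rates. Your added remark on well-posedness (stability persisting after the increment) is a reasonable extra observation but does not change the argument.
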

\begin{proof}
Without loss of generality, consider queue $1$ as the target queue. The mean arrival rate at queue $1$ can be written as $\lambda_1=\sum_{j=1}^{M}\alpha_{1j}\rho_{0j}$. Now consider a receiver $r$ located in the cutset which has exogenous arrivals increment $\delta$. Then the arrival rate becomes $\lambda_1(\delta_{r})=\sum_{j=1}^{M}\alpha_{1j}\rho_{0j}+\alpha_{1r}\delta$, i.e. $\lambda_1+\alpha_{1r}\delta$, where $r \in C$. Identically, when the increment is at a vertex $r \in T$,  the arrival rate at target queue is given by: $\lambda_1(\delta_{r})=\lambda_1+\alpha_{1r}\delta$.  Since $\max_{c \in C} \alpha_{ic} \ge \alpha_{1r}$ for any $r \in T$ from Theorem 1,  it follows that $\max_{r \in C} \lambda_1(\delta_r) \ge \lambda_1(\delta_r)$ for any $r \in T$.
\end{proof}

Theorem 3 shows that the exogenous arrival increments in nearby receivers have a stronger impact on the target traffic density, which reflects the fact that jobs from nearby queues are more likely to be routed to the target. The spatial result for target queue arrival rates also translates to spatial results on queue performance metrics (queue lengths, delays).

\begin{theorem}
Consider an open Jackson queueing network with a routing matrix $R$ and graph $\Gamma(R)$. Consider a particular target queue $q$, and receiver locations $r$ in the cutset $C$ or the tail $T$. Assume that the exogenous arrival rate at a receiver $r$ is increased by an increment $\delta$. The performance metrics at the target queue after the increment are denoted as $L_q(\delta_{r})$ (mean queue length), $W_q(\delta_{r})$ (sojourn time) and $D_q(\delta_{r})$ (delay). These performance metrics satisfy the following inequalities:\\ 
(1). $max_{r \in C} W_q(\delta_{r})\geq W_q(\delta_{r})$ for any $r \in T$ \\
(2). $max_{r \in C} D_q(\delta_{r})\geq D_q(\delta_{r})$ for any $r \in T$ \\
(3). $max_{r \in C} L_q(\delta_{r})\geq L_q(\delta_{r})$ for any $r \in T$ 
\end{theorem}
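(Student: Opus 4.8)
The plan is to reduce all three inequalities to the monotone dependence of each performance metric on the target queue's total arrival rate $\lambda_q$, and then to invoke Theorems 1 and 3. First I would recall, from the proof of Theorem 3, that after an exogenous increment $\delta>0$ at a receiver $r$ the target arrival rate is exactly $\lambda_q(\delta_r)=\lambda_q+\alpha_{qr}\delta$. Hence the ordering of the perturbed arrival rates across receivers coincides with the ordering of the contribution weights $\alpha_{qr}$, and in particular, letting $c^*$ be a cutset queue attaining $\max_{c\in C}\alpha_{qc}$, Theorem 1 gives $\lambda_q(\delta_{c^*})\ge \lambda_q(\delta_t)$ for every $t\in T$. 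Throughout I assume, as elsewhere in the paper, that $\delta$ is small enough that the perturbed network remains in statistical equilibrium, i.e. $\lambda_q(\delta_r)<\mu_q$ for every receiver under consideration, so that \eqref{2}--\eqref{4} remain valid.

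Next I would verify that each of the maps $\lambda\mapsto 1/(\mu_q-\lambda)$, $\lambda\mapsto 1/(\mu_q-\lambda)-1/\mu_q$, and $\lambda\mapsto \lambda^2/(\mu_q^2-\mu_q\lambda)$ is strictly increasing on $[0,\mu_q)$. For the sojourn time $W_q$ and the delay $D_q$ this is immediate, since both have derivative $1/(\mu_q-\lambda)^2>0$ in $\lambda$. For the mean queue length $L_q$, writing it as $\tfrac{1}{\mu_q}\cdot\tfrac{\lambda^2}{\mu_q-\lambda}$ and differentiating gives $\tfrac{1}{\mu_q}\cdot\tfrac{\lambda(2\mu_q-\lambda)}{(\mu_q-\lambda)^2}$, which is positive for $0<\lambda<\mu_q$. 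Consequently $W_q(\delta_r)$, $D_q(\delta_r)$ and $L_q(\delta_r)$ are each increasing functions of $\lambda_q(\delta_r)$.

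Combining the two steps completes the proof. For any $t\in T$, monotonicity together with $\lambda_q(\delta_{c^*})\ge \lambda_q(\delta_t)$ yields $W_q(\delta_{c^*})\ge W_q(\delta_t)$, and therefore $\max_{r\in C}W_q(\delta_r)\ge W_q(\delta_{c^*})\ge W_q(\delta_t)$; the identical chain of inequalities, with the same maximizing cutset queue $c^*$, establishes parts (2) and (3) for $D_q$ and $L_q$. I do not expect a genuine obstacle here: the argument is essentially ``apply a monotone scalar function to the arrival-rate majorization of Theorem 3.'' The only place that warrants a little care is the queue-length formula, whose monotonicity in $\lambda_q$ is slightly less transparent than that of the delay and sojourn-time formulas, together with the standing stability assumption needed to keep formulas \eqref{2}--\eqref{4} meaningful after the perturbation.
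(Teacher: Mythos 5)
Your proposal is correct and follows essentially the same route as the paper's proof: reduce each metric to a monotone increasing function of the perturbed arrival rate $\lambda_q(\delta_r)$ and then apply the majorization from Theorem 3 (and hence Theorem 1). You are in fact slightly more careful than the paper, which only spells out the sojourn-time case and omits the queue-length monotonicity check and the post-perturbation stability caveat that you rightly make explicit.
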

\begin{proof}
    Without loss of generality, consider queue 1 as the target queue. We first show the result for the mean sojourn time.  We denote the arrival rate at the target queue when the exogenous traffic increment is present as $\lambda_1(\delta_{r})$. Then the corresponding mean queue length can be written as $W_1(\delta_{r})=\dfrac{1}{\mu_1-\lambda_1(\delta_{r})}$. From Theorem 3, $max_{r \in C}\lambda_1(\delta_{r}) \ge \lambda_1(\delta_{r})$ for any $r \in T$ is given.  This yields that 
    $max_{r \in C}W_1(\delta_{r})=\dfrac{1}{\mu_1-max_{r \in C}\lambda_1(\delta_{r})}$ is no smaller than $W_1(\delta_{r})=\dfrac{1}{\mu_1-\lambda_1(\delta_{r})}$ for any $r \in T$. The results for the mean delay and queue length can be proved following the same method, hence we omit the details .
\end{proof}
For a particular target queue, perturbations to exogenous arrivals at nearby queues have larger impact on performance statistics as compared to perturbations at remote locations; thus, the main queue-specific 

The mean overall stay time for jobs entering at a target queue is also of interest.  Next, we characterize how this stay time depends on increments in the service rate at different queues.  Intuitively, the stay time is expected to depend more strongly on service-rate changes at nearer locations in the network graph. This is formalized in the following theorem: 
\begin{theorem}
Consider an open Jackson queueing network with a routing matrix $R$ and graph $\Gamma(R)$. Consider a target queue $q$, a cutset $C$, and tail $T$.  Say that there is an  increment in the service rate at the target queue, and consider the impact on the stay time for a job that enters a queue $i$ which may be in the cutset or the tail.  Then the changes in the stay times $\Delta(T_{i,d})$  (i.e. difference between the stay time before the increment and after the increment) satisfy 
$\max_{i \in C} \Delta(T_{i,d}) \ge \Delta(T_{i,d})$ for any $i \in T$. 
\end{theorem}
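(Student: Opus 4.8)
The plan is to reduce the statement to Theorem 1 by observing that a service-rate increment at the target queue perturbs the stay-time vector $\bm{W}$ in a single coordinate, with a common positive multiplier.

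First I would record what changes and what does not. The equilibrium arrival rates solve $\bm{\lambda}=(I-R')^{-1}\bm{\rho}$, which depends only on the routing matrix $R$ and the exogenous rates $\bm{\rho}$; an increment to the service rate $\mu_q$ leaves both untouched, so every $\lambda_i$ and the matrix $R$ are unchanged. Among the sojourn times $W_i=1/(\mu_i-\lambda_i)$ in \eqref{3}, only $W_q$ depends on $\mu_q$, so the only altered component of $\bm{W}$ is the $q$-th. Moreover $W_q$ is strictly decreasing in $\mu_q$, and increasing $\mu_q$ preserves the stability condition $\lambda_q<\mu_q$; hence, writing $\Delta W_q$ for the ``before minus after'' difference, $\Delta W_q>0$, and $\Delta\bm{W}=(\Delta W_q)\,\bm{e}_q$.

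Next I would use \eqref{5} together with linearity. Since $T_{i,d}=[(I-R)^{-1}\bm{W}]_i$ and $(I-R)^{-1}$ is unchanged,
\[
\Delta(T_{i,d})=[(I-R)^{-1}\,\Delta\bm{W}]_i=[(I-R)^{-1}]_{iq}\,\Delta W_q .
\]
Because $\alpha=(I-R')^{-1}$, we have $(I-R)^{-1}=\alpha'$, so $[(I-R)^{-1}]_{iq}=\alpha_{qi}$, giving $\Delta(T_{i,d})=\alpha_{qi}\,\Delta W_q$ for every queue $i$. Finally I would invoke Theorem 1 with the same target $q$ (which lies in the head) and cutset $C$: for every $i\in T$ one has $\max_{c\in C}\alpha_{qc}\ge\alpha_{qi}$. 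Multiplying through by the positive scalar $\Delta W_q$ and using the identity just derived yields $\max_{c\in C}\Delta(T_{c,d})\ge\Delta(T_{i,d})$ for every $i\in T$, which is the claim.

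I expect no serious obstacle here: the content is essentially Theorem 1 re-expressed through \eqref{5}. The only points needing care are the bookkeeping that ties $[(I-R)^{-1}]_{iq}$ to the contribution weight $\alpha_{qi}$ (the transpose), the verification that the arrival rates and $R$ are genuinely unaffected by a service-rate change, and the sign of $\Delta W_q$ — which together ensure that the single-coordinate perturbation of $\bm{W}$ propagates to $\Delta(T_{i,d})$ with a common positive multiplier, so that the majorization in Theorem 1 carries over verbatim.
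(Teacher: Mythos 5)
Your proposal is correct, and it follows exactly the route the paper intends: the paper omits this proof as ``similar to those of the previous results,'' all of which reduce a performance-metric comparison to the contribution-weight majorization of Theorem 1, which is precisely what your single-coordinate perturbation $\Delta\bm{W}=(\Delta W_q)\bm{e}_q$ together with the identity $[(I-R)^{-1}]_{iq}=\alpha_{qi}$ accomplishes. Your write-up is in fact more explicit than the paper's, carefully handling the two points (the transpose bookkeeping and the positivity of $\Delta W_q$) on which the omitted argument actually relies.
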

The proof is similar to those of the previous results, hence is omitted.
\section{Model Reduction}

We explore reduction of a Jackson network model to a smaller-dimensional model (i.e. one with fewer queues), with the goal of enabling faster simulation and analysis of the model.  The spatial analysis of the Jackson network developed in the prior sections indicates that the model has a localization property, in the sense that traffic flows are predominantly impacted by nearby exogenous inputs and queueing dynamics.  This suggests that model reductions which preserve a portion of a Jackson network while aggregating or reducing away other queues in the network may give good approximations.  Here, we develop a model reduction strategy of this type, with the goal of preserving the contribution weights (and hence resultant performance statistics) in a portion of the queueing network.  

Prior to presenting the model-reduction algorithm, we first define a notion of an {\em equivalent reduction} for a Jackson network:

\begin{definition}
Consider an open Jackson network model with $M$ queues which has routing matrix $R$ and network graph $\Gamma(R)$.  Also, consider a second open Jackson network with $D<M$ queues which has routing matrix $R^{\star}$ and network graph $\Gamma(R^{\star})$.  The second Jackson network model is said to be an {\em equivalent reduction} of the first model if  $\alpha^{\star}_{i,j}=\alpha_{i,j}$ for $i=1,2,\dots,D$ and $j=1,2,\dots,D$, where $\alpha^{\star}$ and $\alpha$ are the contribution matrices for the two models. 
\end{definition}

The following main theorem gives an algorithm for constructing an {\em equivalent reduction} for a Jackson network model.  Specifically, let us consider a subdivision of a Jackson network's queues into three sets -- a head, a cutset, and a tail -- as introduced in Section II.  A reduction for the model is sought, which reduces away the queues in the tail while preserving queues in the head and cutset.  The following theorem shows how to construct the routing matrix for the reduced model, so it is equivalent:

\begin{theorem}
Consider an open Jackson queueing network with routing matrix $R$, which
is block partitioned as given in Equation (6). An open Jackson network with the following routing matrix $R^{\star}$ is an {\em equivalent reduction} of the original model:
\begin{equation}
    R^{\star}=\begin{bmatrix}
    R_{HH} & R_{HC} \\
    R_{CH} & R_{CC}+L_C
    \end{bmatrix}, \label{18}
\end{equation}
where 
\begin{equation}
    L_C=R_{CT}(I-R_{TT})^{-1}R_{TC}. \label{19}
\end{equation}
\end{theorem}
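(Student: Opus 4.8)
The plan is to recognize the equivalent-reduction condition of Definition 1 as a statement about a leading principal block of a matrix inverse, and then to obtain that block through the Schur-complement (block matrix inversion) identity. First I would record that, because $\alpha=(I-R')^{-1}$, the entry $\alpha_{ij}$ equals the $(j,i)$ entry of $(I-R)^{-1}$, and likewise $\alpha^{\star}_{ij}$ equals the $(j,i)$ entry of $(I-R^{\star})^{-1}$. Since the queues are ordered head, then cutset, then tail, and $R^{\star}$ is a $D\times D$ matrix with $D$ the number of head-and-cutset queues, the requirement $\alpha^{\star}_{ij}=\alpha_{ij}$ for all $i,j\le D$ is therefore equivalent to the single matrix identity: the leading $D\times D$ principal block of $(I-R)^{-1}$ coincides with $(I-R^{\star})^{-1}$.

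Next I would partition $I-R$ conformally with the (head $\cup$ cutset)/tail split, writing $I-R=\begin{bmatrix} A & B \\ E & F\end{bmatrix}$ with $A=\begin{bmatrix} I-R_{HH} & -R_{HC} \\ -R_{CH} & I-R_{CC}\end{bmatrix}$, $B=\begin{bmatrix}\bm{0}\\ -R_{CT}\end{bmatrix}$, $E=\begin{bmatrix}\bm{0} & -R_{TC}\end{bmatrix}$, and $F=I-R_{TT}$. The block $F$ is invertible because, as shown in the proof of Theorem 1, the eigenvalues of the substochastic principal submatrix $R_{TT}$ lie strictly inside the unit circle; and $I-R$ is invertible under the standing equilibrium assumption (indeed $\alpha=((I-R)^{-1})'$). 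The block-inversion identity then gives that the leading $D\times D$ block of $(I-R)^{-1}$ equals $\bigl(A-BF^{-1}E\bigr)^{-1}$, and in particular shows that $A-BF^{-1}E$ is invertible.

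The heart of the argument is the computation of $BF^{-1}E$. Because the head has no direct edges to the tail -- the blocks $R_{HT}$ and $R_{TH}$ are zero in (6) -- both $B$ and $E$ have a single nonzero block, so the triple product collapses to $BF^{-1}E=\begin{bmatrix}\bm{0} & \bm{0}\\ \bm{0} & R_{CT}(I-R_{TT})^{-1}R_{TC}\end{bmatrix}=\begin{bmatrix}\bm{0} & \bm{0}\\ \bm{0} & L_C\end{bmatrix}$. Subtracting from $A$ yields $A-BF^{-1}E=I-R^{\star}$ with $R^{\star}$ exactly as in (18)--(19). Hence the leading $D\times D$ block of $(I-R)^{-1}$ is $(I-R^{\star})^{-1}$, which by the first paragraph is precisely the equivalent-reduction condition, and the theorem follows.

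For completeness I would also check that $R^{\star}$ is a legitimate Jackson routing matrix, i.e. that it is nonnegative with row sums at most one, so that the reduced object is genuinely an open Jackson network. Nonnegativity of $L_C$ follows from $(I-R_{TT})^{-1}=\sum_{n\ge 0}R_{TT}^{n}\ge\bm{0}$ together with $R_{CT},R_{TC}\ge\bm{0}$, so $R^{\star}\ge\bm{0}$; and the only row sums that differ between $R$ and $R^{\star}$ are those of the cutset rows, for which the needed inequality reduces to $L_C\bm{1}\le R_{CT}\bm{1}$, which in turn follows from $(I-R_{TT})^{-1}R_{TC}\bm{1}\le\bm{1}$ -- exactly the row-sum bound for $\sum_{n\ge 0}R_{TT}^{n}R_{TC}$ established within the proof of Theorem 1. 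The main obstacle is bookkeeping rather than depth: one must keep the transpose in $\alpha=(I-R')^{-1}$ straight when passing between rows and columns, and must verify the invertibility hypotheses ($F$ and $I-R$) before invoking the block-inversion identity; once the block structure is laid out, the remaining algebra is a one-line matrix multiplication.
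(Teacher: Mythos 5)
Your proof is correct and takes essentially the same route as the paper's: both apply the block-inversion (Schur complement) identity to the (head $\cup$ cutset) versus tail partition of $I-R$ and observe that, because $R_{HT}$ and $R_{TH}$ vanish, the correction term $R_{DT}(I-R_{TT})^{-1}R_{TD}$ collapses onto the cutset block, yielding $L_C$. Your additional verifications -- invertibility of $I-R_{TT}$ and the check that $R^{\star}$ is a genuine nonnegative substochastic routing matrix -- are details the paper leaves implicit.
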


\begin{proof}
     Consider a re-partitioning of the routing matrix $R$ as $R=\begin{bmatrix}
    R_{DD} & R_{DT} \\
    R_{TD} & R_{TT}
    \end{bmatrix}$,
    where $R_{DD}=\begin{bmatrix}
    R_{HH} & R_{HC} \\
    R_{CH} & R_{CC}
    \end{bmatrix}$
    We substitute this block matrix form into the expression $\alpha=(I-R')^{-1}$ and apply the block-partition inverse formula. This yields the following expression for the $D \times D$ submatrix $[\alpha_{i,j}]$, where $i=1,2,\dots,D$ and $j=1,2,\dots,D$:
    \begin{equation}
        [\alpha_{i,j}]=((I-(R_{DD}+R_{DT}(I-R_{TT})^{-1}R_{TD})')^{-1}. \label{20}
    \end{equation}
  
   Next, let us consider the matrix $\alpha^{\star}$, which is given by $\alpha^{\star}=(I-R^{\star '})^{-1}$.  Therefore, if we can
   show that $R^{\star }$ as defined in the theorem statement equals $R_{DD}+R_{DT}(I-R_{TT})^{-1}R_{TD})$, then the theorem statement is proved.  Noticing that $R_{DT}=\begin{bmatrix} \bm{0} \\ R_{CT} \end{bmatrix}$ and $R_{TD}=\begin{bmatrix} \bm{0} & R_{TC} \end{bmatrix}$ and substituting the form for $R_{DD}$, we see that  this equivalence holds.
\end{proof}
Theorem 6 gives a method for constructing a reduced open Jackson network model which preserves part of the original network model, while eliminating the remainder of the network. Specifically, the theorem provides an algorithm for constructing a reduced routing matrix.  Additionally, if exogenous input rates and queue service networks within the preserved subnetwork are maintained, then the reduced model has equivalent performance statistics as compared to the original model for the preserved part of the network.  The reduction has the further benefit that a portion of the routing matrix -- that corresponding to the head -- is exactly maintained, with only the routing probabilities associated with the cutset being modified.  An illustration of the reduction is given in Figure \ref{fig4}.

\begin{figure}[thpb]
\centering  
\subfigure[]{
\begin{minipage}{3.8cm}
\centering 
\includegraphics[scale=0.28]{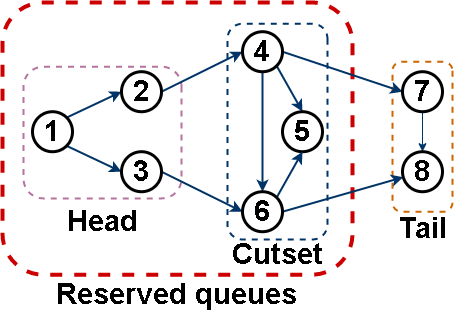}  
\end{minipage}
}
\hfill
\subfigure[]{ 
\begin{minipage}{3.8cm}
\centering    
\includegraphics[scale=0.27]{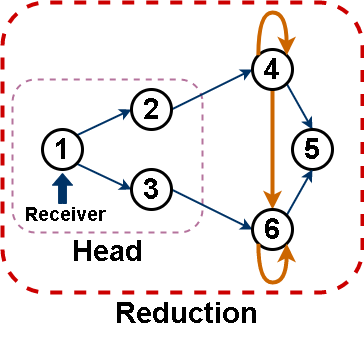}
\end{minipage}}
\caption{Illustration of model reduction. (a) An open Jackson queueing system with 8 nodes. Nodes of a tail labeled 7 and 8. (b) The reduction generates loops for a cutset nodes 4 and 5 that were connected to a tail in (a). In addition, the routing probabilities in a cutset may increase as well.}   
\label{fig4}    
\end{figure}

\begin{example}
A simple example is presented to illustrate the model reduction. An open Jackson queueing network with 5 queues is considered, which has the following routing matrix.  A {\em reduction} which maintains three queues is sought.
\begin{equation}
    R=\left[\begin{array}{cc:c:cc}
    0.4 &0.45 &0 &0	&0	 	   \\
	0.23 &0.3 &0.23 &0 &0	 	\\
	\hdashline
	0 &0.05 &0.08 &0.23 &0.15\\
    \hdashline
    0 &0 &0.21 &0.01 &0.17   \\
	0 &0 &0.29 &0.25 &0.16   
\end{array}\right] \notag
\end{equation}

The {\em equivalent reduction} given by Theorem 4 has the following routing matrix:
\begin{equation}
    R^{\star}=\left[\begin{array}{cc:c}
    0.4 & 0.45 &0 \\
	0.23 & 0.3 & 0.23\\
	\hdashline
	0 & 0.05 & 0.2103 \notag
\end{array}\right] \notag
\end{equation}
As claimed in the theorem, the reduction only changes one block in routing matrix, which corresponds to a single-vertex (single-queue) cutset. As shown below, the contribution matrix for the reduced model $\bm{\alpha}^{\star}$ is identical to the top left $3 \times 3$ block of the contribution matrix $\bm{\alpha}$ for the original model. 
\begin{equation}
    \alpha=\left[\begin{array}{ccc:cc}
	2.22 &0.75 &0.05 &0.01	&0.02 \\
	1.46 &1.95 &0.12 &0.04	&0.05 \\
	0.43   &0.57 &1.30	  &0.37  &0.56  \\
	\hdashline
	0.12 &0.17 &0.38&1.17&0.48 \\
	0.10 &0.14 &0.31 &0.30 &1.39\\
\end{array}\right] \notag
\end{equation}

\begin{equation}
    \alpha^{\star}=\begin{bmatrix}
    2.22 & 0.75 &0.05 \\
	1.46 & 1.95 & 0.12\\
	0.43 & 0.57 & 1.30  \notag
    \end{bmatrix}
\end{equation}
Thus, average performance metrics of individual queues, such as the average queue length and delay time, are also preserved in the network. 
\end{example}	

\section{Simulations}

Simulations of an open Jackson queueing network and a reduced-order approximation are undertaken, to illustrate the graph-theoretic results and evaluate the effectiveness of the model reduction. An open queueing network with 12 queues is considered. The graph $\Gamma(R)$ of routing probabilities is shown in Figure \ref{fig5}.  The service rate $\mu_i$ for each queue is assumed to be $\mu_i=20$ jobs per unit time.  To simplify the presentation,  we consider performance metrics for a target queue (Queue 1), in the case where there are exogenous arrivals at a single receiver queue at a rate of 5 jobs per unit time; this simplified case captures the impacts of individual queues' exogenous arrival processes on the target queue's performance statistics.  Figure \ref{fig6} shows a single simulation of the queue length (number of jobs) in the target queue as a function of time, under the nominal arrival regime. 

\begin{figure}[thpb]
\centerline{\includegraphics[scale=0.35]{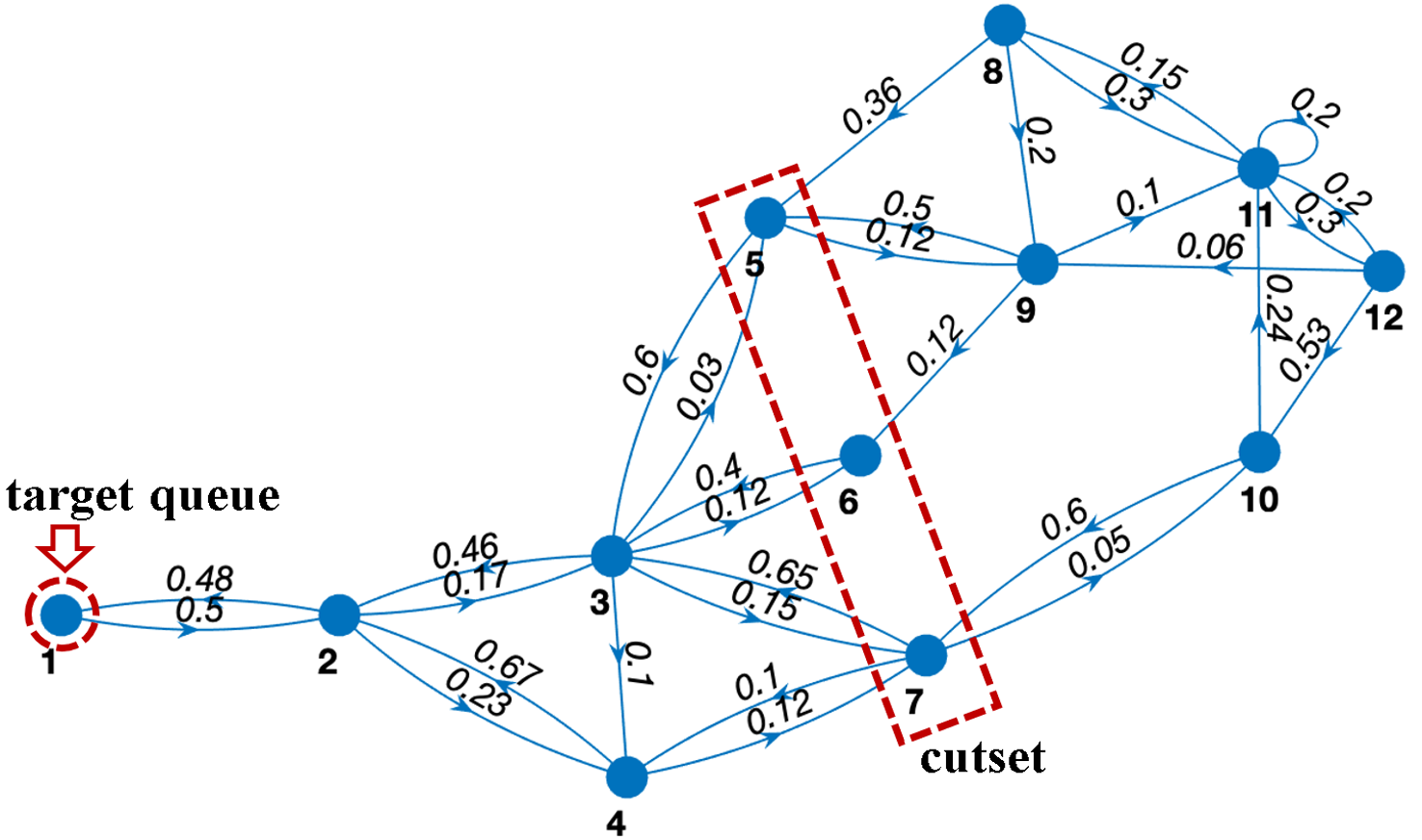}}\vspace{-0.2cm} 
\caption{Graph $\Gamma(R)$ of an open Jackson queueing network mod $\Gamma(R)$.  We study the performance when  queue 1 is selected as a target queue. } 
\label{fig5}
\vspace{-0.8cm} 
\end{figure}

\begin{figure}[thpb]
\centerline{\includegraphics[scale=0.4]{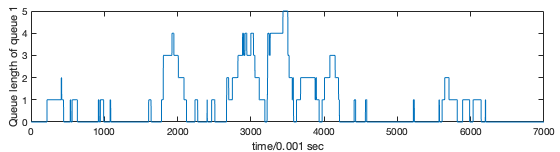}}\vspace{-0.2cm} 
\caption{The queue length of the target queue $1$ versus time.} 
\label{fig6}
\vspace{-0.2cm}
\end{figure}

Spatial patterns in queueing-network performance metrics are illustrated in Figure \ref{fig7}.  Specifically, the change in the mean queue length and mean sojourn time are determined for the target queue (queue $1$), when exogenous arrival rates are incremented at each queue (i.e. each queue is considered as a receiver with an incremented arrival rate). These changes are plotted against the receiver location in Figure \ref{fig7}.  As expected, the impacts on the target queue are largest when the receiver is close to the target, with decreasing impact for remote receivers.  As a specific example, consider the cutset comprising vertices (queues) 5, 6, and 7.  It is seen that the queue length and sojourn time impacts for one receiver queue on the cutset is larger than for any queue that is separated from the target by the cutset.

\begin{figure}[thpb]
\centering 
\subfigure[The dependence of the mean queue length at the target queue on the location of the receiver queue with an increment in jobs is shown.]{
\begin{minipage}[c]{1\linewidth}
\centering 
\includegraphics[scale=0.2]{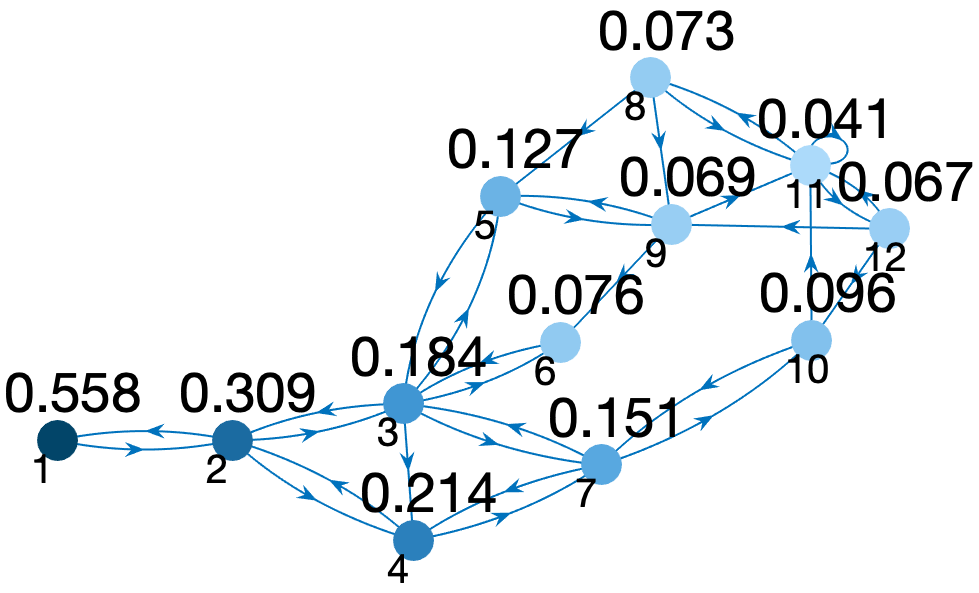}\vspace{0.2cm}
\end{minipage}
}
\hfill
\subfigure[The dependence of the mean queue length at the target queue on the location of the receiver queue with an increment in jobs is shown.]{ 
\begin{minipage}[c]{1\linewidth}
\centering    
\includegraphics[scale=0.2]{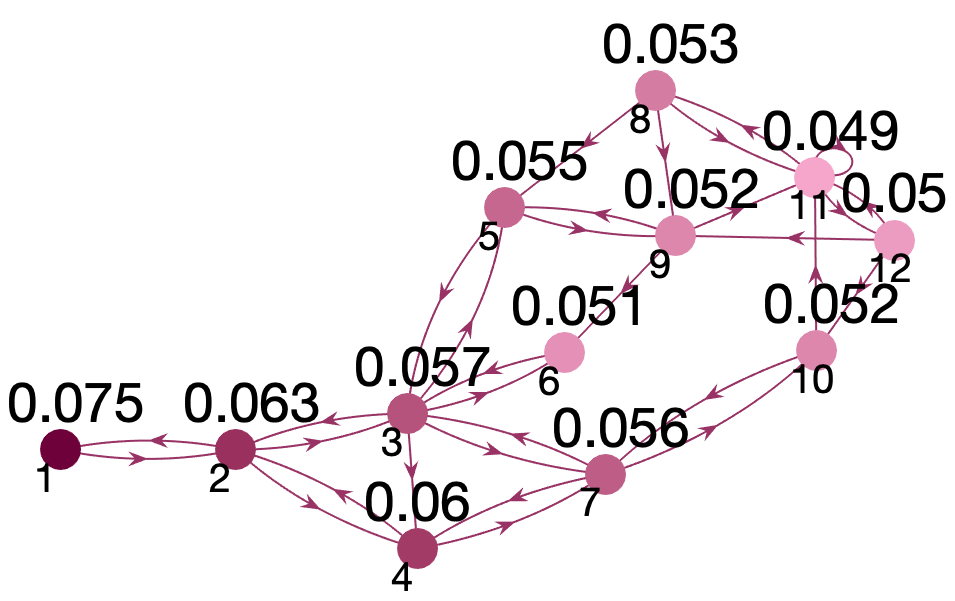}\vspace{0.2cm}
\end{minipage}}
\caption{Visualization of the performance metrics for target queue 1, for different target locations. The value shown above each queue represents the performance metric of the target queue $1$ when the corresponding queue is selected as receiver.}  
\label{fig7}
\vspace{-0.2cm}
\end{figure}

Next, a reduction of the twelve-queue network is constructed, using the procedure described in Section V.  Specifically, an {\em equivalent reduction}  which preserves Queues 1-7 is considered. The routing matrix for the reduced model is compared with the original in Figure \ref{fig8}.  As per the model reduction procedure, the network (routing) graph for Queues 1-7 is preserved, but the routing probabilities within the cutset separating the preserved and reduced parts of the network (Vertices 5-7 in this case) are modified.  A simulation of the queue length of the target queue $1$ is shown for the reduced model(Figure \ref{fig9}); the queue dynamics are similar to those of the original model.

\begin{figure}[thpb]
\centering  
\subfigure[The network graph $\Gamma(R)$ of the original queueing network.]{
\begin{minipage}[c]{1\linewidth}
\centering 
\includegraphics[scale=0.35]{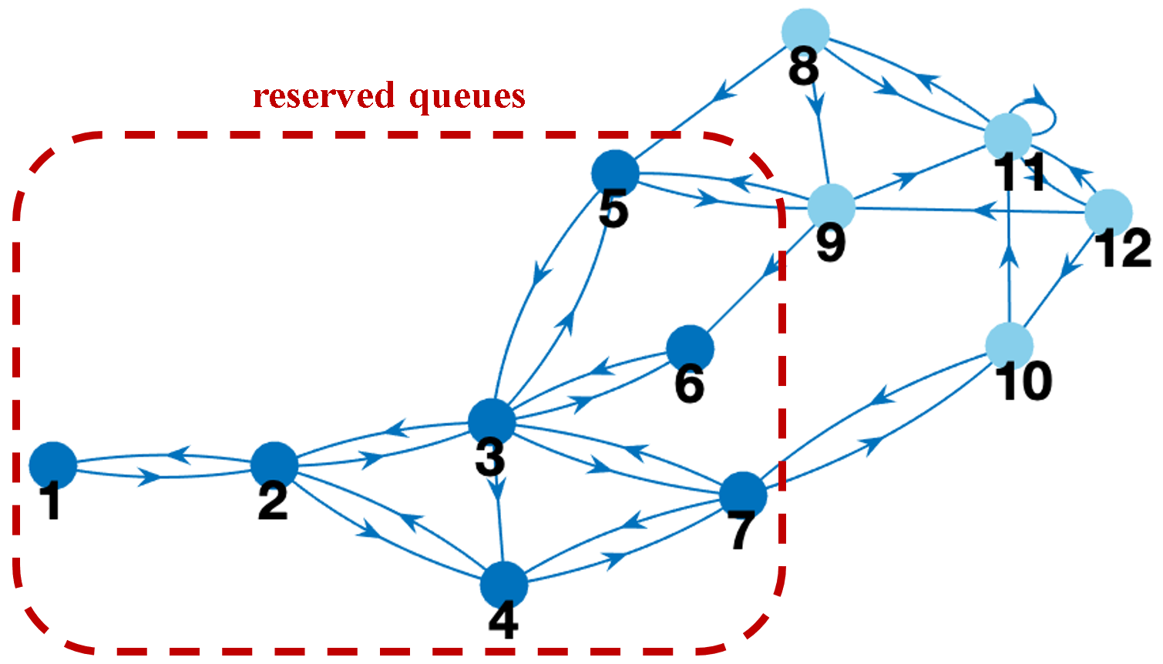}\vspace{0.2cm}
\end{minipage}
}
\hfill
\subfigure[The network graph $\Gamma(R)$ for the reduced model.]{ 
\begin{minipage}[c]{1\linewidth}
\centering    
\includegraphics[scale=0.2]{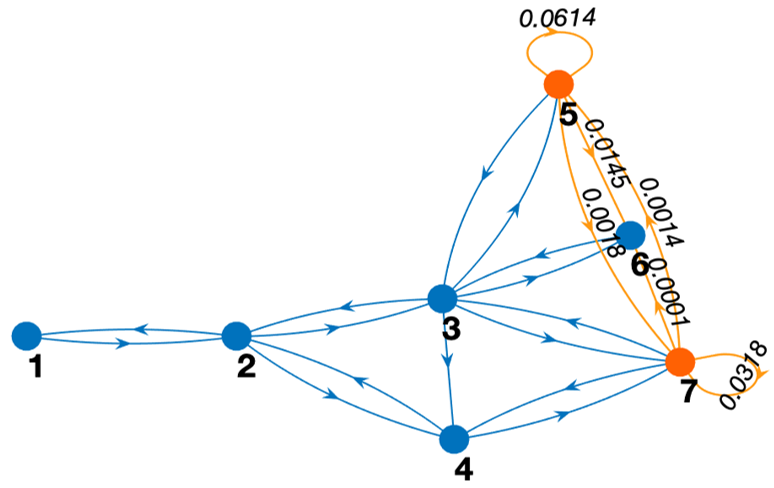}\vspace{0.2cm}
\end{minipage}}
\caption{Reduction of the $12$-queue open Jackson queueing network model to obtain a $7$ queue approximation is illustrated. }   
\label{fig8}    
\end{figure}

\begin{figure}[thpb]
\centerline{\includegraphics[scale=0.4]{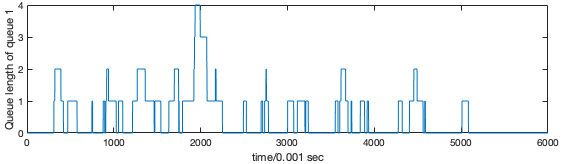}} 
\caption{The queue length of the target queue $1$ versus time.} 
\label{fig9}
\end{figure}

\begin{figure}[thpb]
\centering  
\subfigure[The mean queue length of the target queue 1.]{
\begin{minipage}[c]{1\linewidth}
\centering 
\includegraphics[scale=0.2]{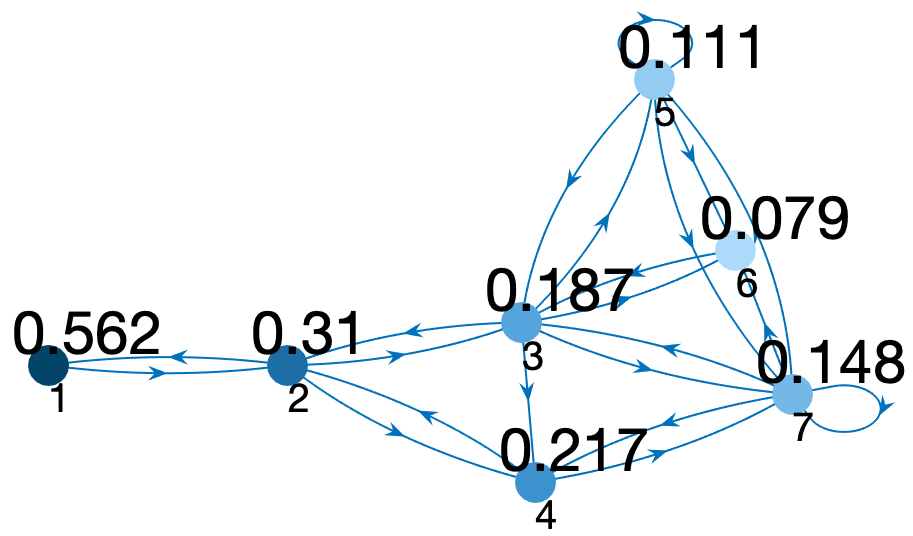}\vspace{0.2cm}
\end{minipage}
}
\hfill
\subfigure[The mean sojourn time of the target queue 1.]{ 
\begin{minipage}[c]{1\linewidth}
\centering    
\includegraphics[scale=0.2]{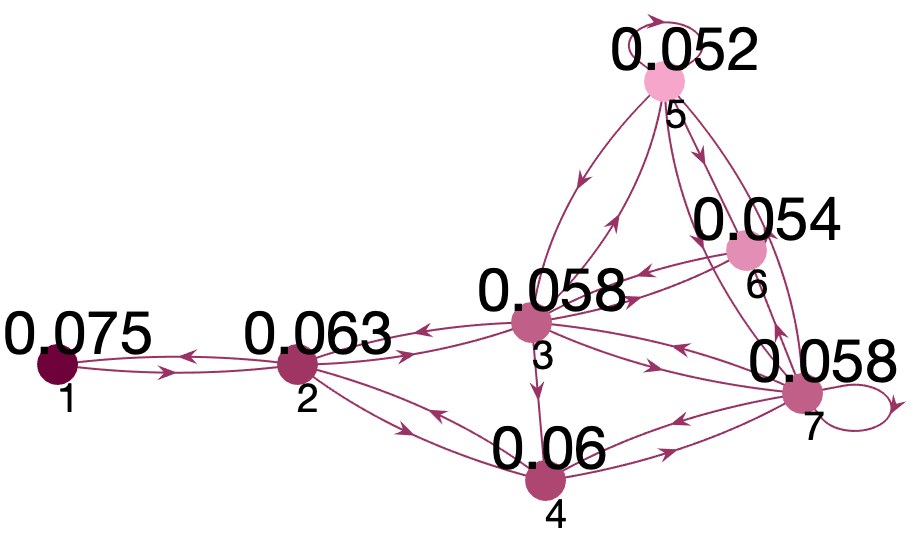}\vspace{0.2cm}
\end{minipage}}
\caption{Visualization of the performance metrics of a target queue in a reduction. The behavior of each receiver reserved from the original network are almost replicated in this structure-preserving reduction.}   
\label{fig10} 
\end{figure}

The analysis of mean performance metrics (mean queue length and sojourn time) is replicated for 
the reduced model, as shown in Figure \ref{fig10}.  
The formal analysis of the reduced model indicates that these first-order statistics should be exactly preserved in the reduced model as compared to the original.  Indeed, our computations of the statistics using long-duration simulations bear this out.

Finally, an additional performance metric -- the overall stay time of a job which arrives at each queue at each receiver location -- is compared for the original and the reduced model in Figure \ref{fig11}.  

\begin{figure}[thpb]
\centerline{\includegraphics[scale=0.38]{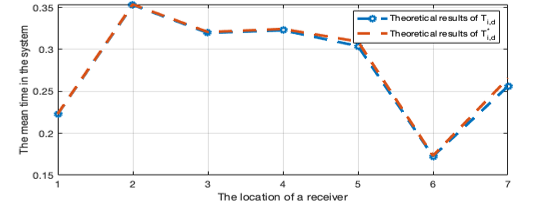}}
\caption{The overall stay time of jobs in the network for each receiver location, for the original and reduced model.}
\label{fig11}
\end{figure}

The stay-time metric is not guaranteed to be preserved in the reduced model.  However, as illustrated in the figure, the metric is found to be very well preserved via the reduction, with only a small error evident when the job enters at locations near the reduced subnetwork.

\bibliographystyle{IEEEtran}
\bibliography{IEEEabrv,ref.bib}

\end{document}